\DeclareMathOperator*{\argmin}{arg\,min}
\newtheorem{lemma}{Lemma}
\newtheorem{corollary}{Corollary}
\newtheorem{proposition}{Proposition}
\begin{document}
		\title{STAR-RIS-NOMA Networks: An Error Performance Perspective}

\author{Mahmoud~Aldababsa,~\IEEEmembership{}
	Aymen~Khaleel,~\IEEEmembership{Graduate Student Member,~IEEE}
	and~Ertugrul~Basar,~\IEEEmembership{Senior~Member,~IEEE}
\thanks{\hspace{-0.35cm}
	M. Aldababsa is with the Department of Electrical and Electronics Engineering, Istanbul Gelisim University, 34310, Istanbul, Turkey. (e-mail: mhkaldababsa@gelisim.edu.tr).
	
	A. Khaleel and E. Basar are with the Communications Research and Innovation Laboratory (CoreLab), Department of Electrical and Electronics Engineering, Koc University, Sariyer 34450, Istanbul, Turkey. (e-mail: akhaleel18@ku.edu.tr, ebasar@ku.edu.tr).}
	}

	\maketitle

\begin{abstract}
	This letter investigates the bit error rate (BER) performance of simultaneous transmitting and reflecting reconfigurable intelligent surfaces (STAR-RISs) in non-orthogonal multiple access (NOMA) networks. In the investigated network, a STAR-RIS serves multiple non-orthogonal users located on either side of the surface by utilizing the mode switching protocol. We derive the closed-form BER expressions in perfect and imperfect successive interference cancellation cases. Furthermore, asymptotic analyses are also conducted to provide further insights into the BER behavior in the high signal-to-noise ratio region. Finally, the accuracy of our theoretical analysis is validated through Monte Carlo simulations. The obtained results reveal that the BER performance of STAR-RIS-NOMA outperforms that of the classical NOMA system, and STAR-RIS might be a promising NOMA 2.0 solution.
\end{abstract}

\begin{IEEEkeywords}
	Bit error rate, non-orthogonal multiple access, simultaneous transmitting and reflecting reconfigurable intelligent surface, successive interference cancellation.
\end{IEEEkeywords}

	\IEEEpeerreviewmaketitle

	\section{Introduction}
	
	\IEEEPARstart{N}{on-orthogonal} multiple access (NOMA) has been recognized as a promising multiple access candidate to satisfy the challenging requirements of sixth-generation wireless networks such as high spectral efficiency, massive connectivity, and low latency \cite{NOMA2020}. Different from conventional orthogonal multiple access (OMA), in which different users are allocated to different resource blocks (time, frequency, code), NOMA can accommodate multiple users via the same resource block, which effectively enhances the spectral efficiency. In power-domain (PD)-NOMA, users are assigned to different power levels, and the signals of all users are superimposed into a single message transmitted from the base station (BS). At the receiver side, successive interference cancellation (SIC) is applied to eliminate the inter-user interference and to recover the transmitted symbol\cite{noma2017}. In \cite{BERNOMA1}, the authors analyzed the bit error rate (BER) performance of uplink and downlink NOMA networks, where a binary phase shift keying (BPSK) and quadrature phase shift keying modulation schemes are employed for the far and near users, respectively. In \cite{BERNOMA2}, the closed-form expressions are derived for the BER of a two-user non-orthogonal NOMA system using quadrature amplitude modulation. In \cite{Multi}, the authors proposed a multi carrier-based technique that combines both transmit diversity and NOMA protocol, resulting in enhancing both reliability and sum-rate performance.

Reconfigurable intelligent surfaces (RISs) have also attracted attention as a promising candidate for next generation wireless communication networks. An RIS consists of a massive number of low-cost passive elements that reconfigure the propagation of incident wireless signals by adjusting each element's amplitude and phase shift. In \cite{IoT}, the authors maximized the sum throughput for an RIS-assisted internet-of-things network by jointly optimizing the RIS passive beamforming vector, the transfer time scheduling, and power splitting ratio, under energy harvest-then-transmit policy protocol. In \cite{ber1-ris}-\hspace{-0.01cm}\cite{RIS1}, the authors analyzed the BER for a RIS-assisted NOMA system, where the RIS is partitioned into multiple subsurfaces, each allocated to serve a single user. It is worthing to mention that RISs can only serve the users located on the same side of the surface. Nevertheless, simultaneous transmitting and reflecting RISs (STAR-RISs) are recently proposed to achieve $360^{\circ}$ wireless coverage \cite{STARRIS2021}. Compared to classical RISs, STAR-RISs can serve users located on both sides of their surface. STAR-RISs consist of elements that can produce both electric polarization and magnetization currents, which allows simultaneous control of the transmit and reflect incident signals towards the users located at different sides of the surface. Recently, a number of works have been reported incorporating NOMA and STAR-RIS in wireless networks \cite{star32021}-\hspace{-0.01cm}\cite{star182021}. The authors in \cite{star32021} showed that the coverage can be significantly extended by integrating STAR-RISs into NOMA networks. The authors in \cite{star52021} took advantage of STAR-RIS to simultaneously eliminate  inter-cell interference and enhance the desired signals in NOMA enhanced coordinated multi-point transmission networks. In \cite{weighted-sum}, \citen{secrecy} and \citen{star62021}, the authors maximized the weighted sum-rate for the STAR-RIS-assisted multiple-input multiple-output system, the sum secrecy rate for the STAR-RIS-assisted multiple-input single-output system, and the achievable sum-rate of STAR-RIS-NOMA system, respectively. The STAR-RIS in \cite{star82021} is utilized for adjusting users' decoding order to efficiently mitigate the mutual interference between users and to extend the coverage of heterogeneous networks. In \cite{star92021} and \cite{star102021}, the authors proposed different approximated mathematical channel models to investigate the outage probability (OP) performance for STAR-intelligent omini-surfaces based NOMA and STAR-RIS-NOMA multi-cell networks. In \cite{star122021}, the authors considered a realistic transmission and reflection coupled phase-shift model for STAR-RISs. The authors in \cite{star2021} proposed a STAR-RIS partitioning algorithm for a STAR-RIS-NOMA network. It is designed to maximize the sum-rate and guarantee the quality-of-service requirements by assigning a proper number of STAR-RIS elements to each user. In \cite{star132021} and \cite{star142021}, optimizations problems are considered to minimize the power consumption and maximum secrecy OP, respectively, for uplink STAR-RIS-NOMA networks. The authors in \cite{star152021} investigated the resource allocation scheme in STAR-RIS-aided multi-carrier OMA and NOMA networks to maximize the sum-rate. In \cite{star162021}, a deep reinforcement learning-based algorithm is designed to maximize the energy efficiency for a STAR-RIS-NOMA network. The authors in \cite{star172021} proposed practical phase-shift configuration strategies for STAR-RIS-NOMA networks with correlated phase shifts. In \cite{star182021}, the performance of STAR-RIS-NOMA networks is investigated in terms of OP and ergodic sum-rate.

To the best of the authors' knowledge, the studies of \cite{BERNOMA1}-\hspace{-0.01cm}\cite{BERNOMA2} assumed a simple scenario of two NOMA users, which might not be practical. Furthermore, the works of \cite{ber1-ris}-\hspace{-0.01cm}\cite{RIS1} assumed that each user receives the signals reflected from the RIS partition assigned to it only and without interference signals from the other partition, which is physically impossible. Moreover, the performance of STAR-RIS-NOMA networks in \cite{star32021}-\hspace{-0.01cm}\cite{star182021} is analyzed in terms of OP and sum-rate only, and no light has been shed on the BER  performance aspect yet. These have motivated us to investigate the BER performance for STAR-RIS-NOMA networks, where BS communicates multiple non-orthogonal users with the assistance of a STAR-RIS. The main contributions of the paper can be summarized as follows. We analyze the BER performance of the STAR-RIS-NOMA network considering the subsurface mutual interference within the RIS transmission or reflection part in the cases of perfect and imperfect SIC. We derive the closed-form expressions of the BER for the users employing a BPSK modulation. Then, asymptotic analyses are carried out to provide further insight into BER behavior in the high signal-to-noise ratio (SNR) region. Finally, we verify our analytical results by Monte Carlo simulations, which demonstrate the superiority of the proposed network over the classical NOMA system.\footnote{{\it Notation:} Matrices and vectors are denoted by an upper and lower case boldface letters, respectively. $\mathbb{C}^{m\times n}$ denotes the set of matrices with dimension $m\times n$. $|\cdot|$, $\mathrm{E}[\cdot]$, and $(\cdot)^{T}$ represent the absolute value, expectation operator, and transpose operator, respectively. $f_{X}\left(x\right)$ denotes probability density function (PDF) of a random variable (RV) $ X $. $ \mathcal{CN}(\mu,\sigma^{2})$ stands for the complex Gaussian distribution with mean $\mu$ and variance $\sigma^{2}$. $Q\left(\cdot\right)$ and $ erf (\cdot) $ denote the $ Q $-function and error function, respectively.}
	

	\section{System Model}
	\label{sec:systemmodel}

	\begin{figure}[]
	\includegraphics[width=.33\textwidth]{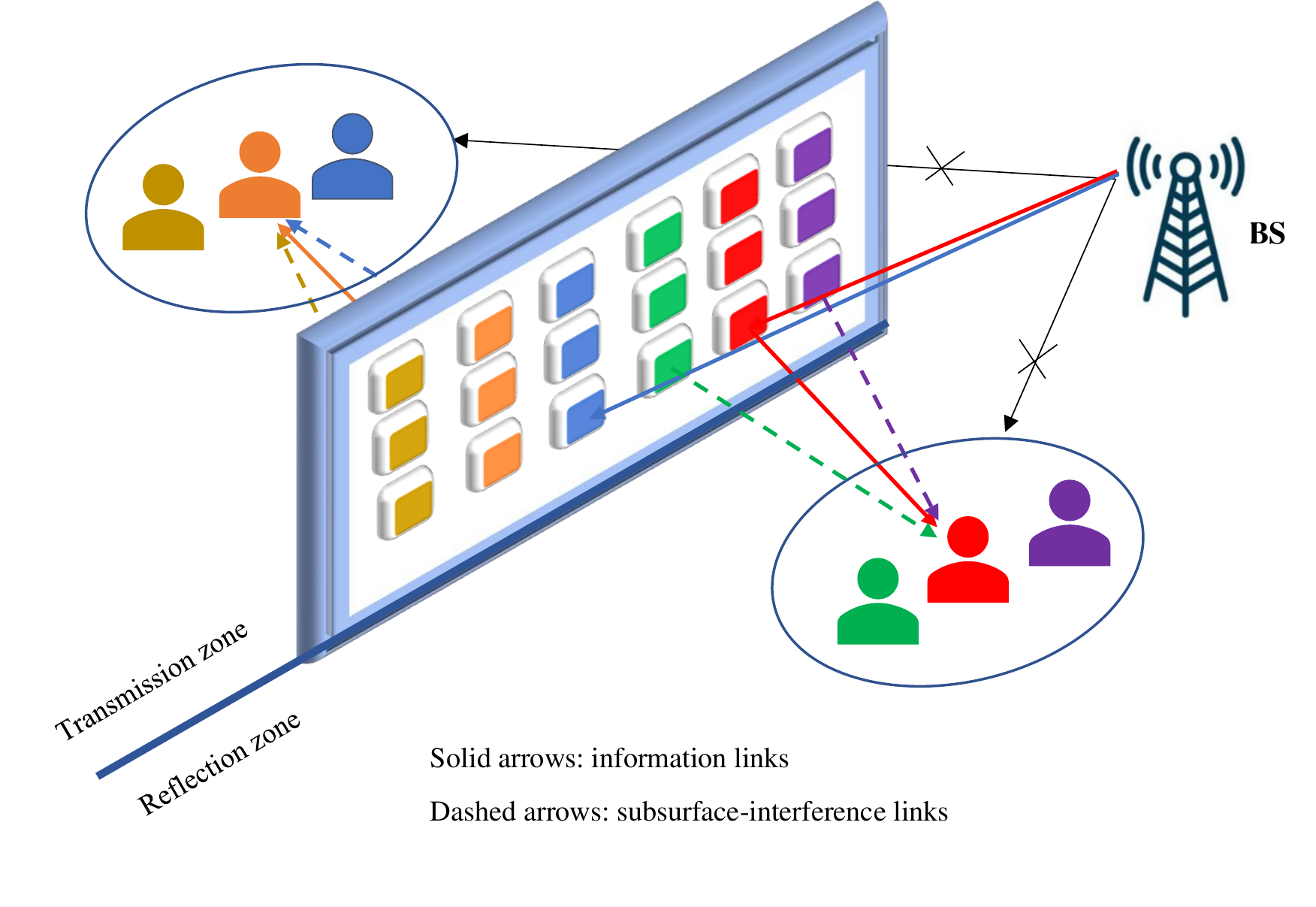}
	\centering
	\caption{A STAR-RIS-assisted NOMA system model.}	
	\label{System model}
	\vspace{-0.8cm}
\end{figure}

	As illustrated in Fig. \ref{System model}, we consider a STAR-RIS-NOMA network, in which a single-antenna BS communicates simultaneously $K$ single-antenna users with the help of a STAR-RIS equipped with $N$ passive elements. The $K$ users $\left(\text{U}_{1}, ..., \text{U}_{K}\right)$ are deployed on both sides of the STAR-RIS, where $K_{t}$ and $K_{r}$ users are located in the transmission and reflection zones, respectively. The direct communication links between the BS and users are assumed to be unavailable due to natural or manmade obstacles and the STAR-RIS is deployed to provide alternative communication links for users through the transmission/reflection elements. The STAR-RIS adopts mode switching (MS) protocol, where each element can operate in full transmission or reflection mode. Thus, the STAR-RIS is partitioned into two main parts, where the first and second parts contain $N_t$ transmitting and $N_r$ reflecting elements to serve the users in transmission and reflection zones, respectively. Both parts of the RIS are partitioned into subsurfaces, where each subsurface is allocated to serve a specific user in the transmission or reflection zone. The BS-STAR-RIS link is assumed to follow Rayleigh fading channel model, where $\mathbf{h}_{i}\in \mathbb{C}^{N_{i}\times 1}$ denotes the BS-$i$th subsurface vector with $ N_{i} $ refers to the number of elements that belong to the $i$th subsurface in the transmission ($t$) or reflection ($r$) part of the STAR-RIS. $\mathbf{h}_{i} = \left[h^{(1)}_{i}, ..., h^{(n)}_{i}, ..., h^{(N_{i})}_{i}\right]^{T}$ with $h^{(n)}_{i}=\sqrt{L_{BS}}\zeta^{(n)}_{i}e^{-j\phi^{(n)}_{i}}$ denoting the channel coefficient between the BS and $n$th STAR-RIS element in the $i$th subsurface, where $L_{BS}$, $\zeta^{(n)}_{i}$, and $\phi^{(n)}_{i}$ denote the path gain, channel amplitude, and channel phase, respectively, and $h_i^{(n)}\sim \mathcal{CN}(0,L_{BS})$. Likewise, the STAR-RIS-$\text{U}_{k}$ link is assumed to follow Rayleigh fading channel model, where the channel vector between the $i$th subsurface in the transmission or reflection part of the STAR-RIS and $\text{U}_{k}$ is denoted by $\mathbf{g}_{k,i}\in \mathbb{C}^{N_{i}\times1}$, $\mathbf{g}_{k,i} = \left[g^{(1)}_{k,i}, ..., g^{(n)}_{k,i}, ..., g^{(N_{i})}_{k,i}\right]^{T}$, $g^{(n)}_{k,i} = \sqrt{L_{SU,k}}\eta^{(n)}_{k,i}e^{-j\Phi^{(n)}_{k,i}}$ denotes the channel coefficient between $n$th RIS element in the $i$th subsurface and $\text{U}_{k}$, where $L_{SU,k}$, $\eta^{(n)}_{k,i}$, and $\Phi^{(n)}_{k,i}$ denote the path gain, channel amplitude, and channel phase, respectively, and $g^{(n)}_{k,i}\sim \mathcal{CN}(0,L_{SU,k})$. The transmission and reflection coefficients for the $i$th subsurface in the transmission and reflection parts of the STAR-RIS are denoted by the entries of the diagonal matrix $\mathbf{\Theta}_{i}\in \mathbb{C}^{N_{i}\times N_{i}}$, for the $n$th element we have $\Theta^{(n,n)}_{i}=e^{j\theta^{(n)}_{i}}$, where $\theta^{(n)}_{i}\in[0,2\pi)$. In the considered setup, the users are ordered according to their channel gains such $\text{U}_1$ and $\text{U}_K$ are the weakest and strongest channel gains, respectively. Thus, the power allocation are inversely ordered as $a_{1}\ge...\ge a_{K}$. Given, $P$ is the BS transmit power, $x_k$ and $a_k$ are $\text{U}_k$'s symbol and power allocation factor, respectively, ${\mathrm{E}}\left[|x_{k}|^{2}\right]=1$, $\sum_{k}a_{k} = 1$. Then, the BS transmits the superimposed signal $x = \sum_{k}\sqrt{a_{k}P}x_{k}$ to all users and the received signal at $\text{U}_k$ is 
	\begin{align} 
		\label{receivedsignal}
		y_{k} &= \mathbf{g}_{k}^{T}\mathbf{\Theta}_{k}\mathbf{h}_{k}x + \sum_{i}^{}\mathbf{g}_{k,i}^{T}\mathbf{\Theta}_{i}\mathbf{h}_{i}x +  n_{k},
	\end{align}
	where $\sum_{i}^{}\mathbf{g}_{k,i}^{T}\mathbf{\Theta}_{i}\mathbf{h}_{i}x$ refers to subsurfaces mutual interference within the transmission or reflection part of the RIS and ${n}_{k}$ denotes the complex additive white Gaussian noise (AWGN) sample with zero mean and variance $\sigma^{2}$ at $\text{U}_k$, i.e., ${n}_{k}\sim \mathcal{CN}(0,\sigma^{2})$. Notably, $\text{U}_1$ has the highest allocated power coefficient, and thus it does not perform the SIC process. It detects its signal directly by considering other users' symbols as noise and implement maximum likelihood detection (MLD). On the other hand, $\text{U}_k$ needs the SIC process to detect and subtract $\left\{1,  ..., k-1\right\}$'s signals and then it implements MLD to detect its signal. Accordingly, the detected $\hat{x}_{k}$ can be stated as
	\begin{align}
	\label{Userk_ML}
	\hat{x}_{k} =\underset{i}{{\argmin}}\left\{\left|y^{*}_{k} - \sqrt{a_{k}P}\mathbf{g}_{k}^{T}\mathbf{\Theta}_{k}\mathbf{h}_{k}x^{(i)}_{k}\right|^{2}\right\},
\end{align}
where $y^{*}_{k} = y_{k} - \sum_{j=1}^{k-1}\sqrt{a_{j}P}\mathbf{g}_{k}^{T}\mathbf{\Theta}_{k}\mathbf{h}_{k}\hat{x}_{j}$ and $x^{(i)}_{k}$ denotes $i$th element in the set of $x_{k}$ constellation.
\section{Bit Error Rate Analysis}	
\label{sec:beranalysis}

The average bit error probability for $\text{U}_k$ is given by
\begin{align}
	\label{Pe_integralk}
	\mathcal{P}_{e,k} =\int_{0}^{\infty}\mathcal{P}(e|\varphi_{k})f_{\varphi_{k}}(x)dx,
\end{align}
where $\varphi_{k} = |\mathbf{g}_{k}^{T}\mathbf{\Theta}_{k}\mathbf{h}_{k}|$, $\mathcal{P}(e|\varphi_{k}) $ and $ f_{\varphi_{k}}(x) $ denote absolute cascaded channel, conditional BER for $\text{U}_k$ in AWGN channel and the PDF of the cascaded channel $\varphi_{k}$, respectively. In what follows, we derive $\mathcal{P}(e|\varphi_{k}) $ and $ f_{\varphi_{k}}(x) $ in Lemmas 1 and 2, respectively, then we obtain $\mathcal{P}_{e,k}$ in the cases of absence and presence of SIC error in Propositions 1 and 2, respectively.	 
\begin{lemma}
	The BER for $\text{U}_k$ in AWGN can be given by
	\begin{align} 
		\label{AWGNuser_k}
		\mathcal{P}\left(e|\varphi_{k}\right) & = \sum_{i}^{}\mathcal{P}\left(A^{(i)}_{k} \right)Q\left(A^{(i)}_{k} \varphi_{k}\sqrt{\varrho_{k}\gamma}\right),
	\end{align}	 
	where $A^{(i)}_{k} $ denotes $i$th combination in $A_{k} \in \left\{\sqrt{a_{k}P} \pm ... \pm \sqrt{a_{K}P}\right\}$ and $\mathcal{P}\left(A^{(i)}_{k}\right) = 2^{k-K}$. Additionally, $\gamma \overset{\bigtriangleup}{=} \frac{P}{\sigma^{2}}$ is the signal-to-noise ratio (SNR), $ \varrho_{k} = \left({1 + L_{k}\left(N_{\chi}-N_{k}\right)\gamma/P}\right)^{-1} $. Here, $ \chi\in\left\{t,r\right\} $, $L_{k}={L_{BS}L_{SU,k}}={d_{BS}^{-\alpha_{BS}}d_{SU,k}^{-\alpha_{SU}}}$ is the overall path gain of the BS-RIS-$\text{U}_k$ link, where $\left(\alpha_{BS}, d_{BS}\right) $, and $\left(\alpha_{SU}, d_{SU,k}\right)$ refer to $\left(\text{path loss exponents, distances}\right)$ associated with the BS-RIS and RIS-$U_k$ links, respectively.
\end{lemma}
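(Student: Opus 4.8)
The plan is to condition on the cascaded fading magnitude $\varphi_k$ and average the AWGN-channel error event over the two quantities that remain random after phase alignment: the bits of the not-yet-decoded lower-power users, and the aggregate intersubsurface interference plus thermal noise. Assuming the STAR-RIS phases are set so that $\mathbf{g}_k^T\mathbf{\Theta}_k\mathbf{h}_k$ is real and equal to $\varphi_k$, and that the SIC of $\{x_1,\dots,x_{k-1}\}$ is correct, the statistic $y_k^{*}$ in \eqref{Userk_ML} reduces to $\varphi_k\big(\sqrt{a_kP}\,x_k+\sum_{j>k}\sqrt{a_jP}\,x_j\big)+I_k+n_k$, where $I_k=\sum_{i}\mathbf{g}_{k,i}^{T}\mathbf{\Theta}_{i}\mathbf{h}_{i}x$ is the interference term in \eqref{receivedsignal}. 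I would then note that for BPSK the minimum-distance metric in \eqref{Userk_ML} collapses to a sign decision about the origin, so a symbol error for $x_k$ occurs precisely when the real part of the effective noise overshoots the effective signal level.

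Next I would treat the residual NOMA term $\sum_{j>k}\sqrt{a_jP}\,x_j$ as a superimposed BPSK constellation. Fixing $x_k=+1$ and letting the $K-k$ interfering bits range over all sign patterns yields the $2^{K-k}$ equiprobable effective amplitudes $A_k^{(i)}=\sqrt{a_kP}\pm\cdots\pm\sqrt{a_KP}$, each with probability $2^{-(K-k)}=2^{k-K}$; by BPSK symmetry the case $x_k=-1$ gives the mirror-image set with identical per-pattern error probability. Conditioned on $\varphi_k$ and on one such pattern the decision variable is Gaussian, so its error probability is the single term $Q\!\big(A_k^{(i)}\varphi_k/\sigma_{\mathrm{eff}}\big)$ for an effective noise scale $\sigma_{\mathrm{eff}}$; the law of total probability over the patterns then assembles the stated sum $\sum_i\mathcal{P}(A_k^{(i)})\,Q(\cdot)$.

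The crux, and the step I expect to be the main obstacle, is to pin down $\sigma_{\mathrm{eff}}$ so that $1/\sigma_{\mathrm{eff}}=\sqrt{\varrho_k\gamma}$. Here I would invoke the central limit theorem to model $I_k$ as a zero-mean complex Gaussian: each of the $N_\chi-N_k$ elements outside the $k$th subsurface contributes a product channel $g_{k,i}^{(n)}h_i^{(n)}$ of power $L_{BS}L_{SU,k}=L_k$, and because the phases $\Theta_i$ are matched to the desired subsurface rather than to these links, the contributions add incoherently. Summing their powers, adding the AWGN variance $\sigma^2$, taking the real-axis projection relevant to BPSK, and re-expressing everything through $\gamma=P/\sigma^2$ is what collapses the signal-to-interference-plus-noise ratio into the single factor $\varrho_k=\big(1+L_k(N_\chi-N_k)\gamma/P\big)^{-1}$ multiplying $\gamma$ inside the $Q$-function. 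The difficulty is not probabilistic but bookkeeping: one must track the transmit-power normalization and the factor-of-two from the complex-to-real reduction carefully so that all constants fuse exactly into $\varrho_k\gamma$, which is where the derivation is most error-prone.
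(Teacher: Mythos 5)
Your proposal is correct and follows essentially the same route as the paper: you invoke the CLT to Gaussianize the subsurface interference $\sum_{i}\mathbf{g}_{k,i}^{T}\mathbf{\Theta}_{i}\mathbf{h}_{i}x$ into an effective noise of variance $\sigma^{2}+L_{k}\left(N_{\chi}-N_{k}\right)$, and then apply the standard superposed-BPSK decomposition over the $2^{K-k}$ equiprobable amplitudes $A_{k}^{(i)}$ to assemble the sum of $Q$-functions. The only difference is that you explicitly work out the sign-decision and constellation bookkeeping that the paper delegates to the cited NOMA BER derivations \cite{BERNOMA1}, \cite{BERNOMA2}, which is consistent with (and slightly more self-contained than) the paper's own proof.
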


\begin{proof} 
According to the central limit theorem (CLT), for large number of RIS elements, $\sum_{i}^{}\mathbf{g}_{k,i}^{T}\mathbf{\Theta}_{i}\mathbf{h}_{i}x$ in \eqref{receivedsignal} converges to Gaussian RV, i.e., $ \sum_{i}^{}\mathbf{g}_{k,i}^{T}\mathbf{\Theta}_{i}\mathbf{h}_{i}x\sim \mathcal{CN}\left(0,L_{k}\left(N_{\chi}-N_{k}\right)\right)$. Then, $y_{k} = \mathbf{g}_{k}^{T}\mathbf{\Theta}_{k}\mathbf{h}_{k}x + w_{k}$, where $w_{k}\sim \mathcal{CN}(0,\sigma^{2} + L_{k}\left(N_{\chi}-N_{k}\right))$. Following same derivation steps in \cite{BERNOMA1}-\hspace{-0.01cm}\cite{BERNOMA2}, the BER for $\text{U}_k$ in AWGN can be achieved in \eqref{AWGNuser_k}.
\end{proof}

\begin{lemma}
	The PDF of $\varphi_{k}$ is given by
	\begin{align} 
		\label{phi_k}
		f_{\varphi_{k}}(x) =\frac{1}{\sqrt{2\pi v_{k}}}\exp\left(-\frac{\left(x-\mu_{k}\right)^{2}}{2v_{k}}\right),
	\end{align}	
where $\mu_{k} = \frac{\pi}{4}\sqrt{L_{k}}N_{k}$ and $v_{k} = \left(1-\frac{\pi^{2}}{16}\right)L_{k}N_{k}$.
\end{lemma}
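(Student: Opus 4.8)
The plan is to reduce the cascaded-channel magnitude $\varphi_{k}=|\mathbf{g}_{k}^{T}\mathbf{\Theta}_{k}\mathbf{h}_{k}|$ to a scalar sum over the $N_{k}$ elements of the $k$th subsurface and then invoke the CLT, exactly as in the spirit of Lemma~1. First I would write $\mathbf{g}_{k}^{T}\mathbf{\Theta}_{k}\mathbf{h}_{k}=\sum_{n=1}^{N_{k}} g_{k}^{(n)}\,e^{j\theta_{k}^{(n)}}\,h_{k}^{(n)}$ and substitute $h_{k}^{(n)}=\sqrt{L_{BS}}\,\zeta_{k}^{(n)}e^{-j\phi_{k}^{(n)}}$ and $g_{k}^{(n)}=\sqrt{L_{SU,k}}\,\eta_{k}^{(n)}e^{-j\Phi_{k}^{(n)}}$. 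Each summand then carries the phase $e^{j(\theta_{k}^{(n)}-\phi_{k}^{(n)}-\Phi_{k}^{(n)})}$, and since the STAR-RIS configures its phase shifts to co-phase the desired link, i.e.\ $\theta_{k}^{(n)}=\phi_{k}^{(n)}+\Phi_{k}^{(n)}$, the contributions combine coherently and $\varphi_{k}=\sqrt{L_{k}}\sum_{n=1}^{N_{k}}\zeta_{k}^{(n)}\eta_{k}^{(n)}$, a scaled sum of $N_{k}$ i.i.d.\ products of two independent Rayleigh amplitudes, where $L_{k}=L_{BS}L_{SU,k}$.

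Next I would compute the first two moments of a single product term $Z_{n}=\zeta_{k}^{(n)}\eta_{k}^{(n)}$. Since $h_{k}^{(n)}\sim\mathcal{CN}(0,L_{BS})$ and $g_{k}^{(n)}\sim\mathcal{CN}(0,L_{SU,k})$, the normalized amplitudes $\zeta_{k}^{(n)}$ and $\eta_{k}^{(n)}$ are unit-second-moment Rayleigh RVs, for which $\mathrm{E}[\zeta_{k}^{(n)}]=\sqrt{\pi}/2$ and $\mathrm{E}[(\zeta_{k}^{(n)})^{2}]=1$ (and likewise for $\eta_{k}^{(n)}$). By independence, $\mathrm{E}[Z_{n}]=(\sqrt{\pi}/2)^{2}=\pi/4$ and $\mathrm{E}[Z_{n}^{2}]=1$, so $\mathrm{Var}(Z_{n})=1-\pi^{2}/16$.

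Finally, because the $Z_{n}$ are i.i.d.\ with finite mean and variance, the CLT gives, for large $N_{k}$, $\sum_{n=1}^{N_{k}}Z_{n}\approx\mathcal{N}\!\left(N_{k}\pi/4,\,N_{k}(1-\pi^{2}/16)\right)$; multiplying by the deterministic constant $\sqrt{L_{k}}$ scales the mean by $\sqrt{L_{k}}$ and the variance by $L_{k}$, which yields the Gaussian PDF in \eqref{phi_k} with $\mu_{k}=\tfrac{\pi}{4}\sqrt{L_{k}}N_{k}$ and $v_{k}=\left(1-\tfrac{\pi^{2}}{16}\right)L_{k}N_{k}$. I expect the main obstacle to be the bookkeeping that pins down the constants: verifying the exact normalization of the Rayleigh amplitudes (so that $\mathrm{E}[(\zeta_{k}^{(n)})^{2}]=1$ rather than some other scale) and confirming the co-phasing assumption that turns the complex sum into a real, coherently combined one. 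The CLT step itself is routine, but its legitimacy rests on the large-$N_{k}$ regime already invoked for Lemma~1, so I would state that approximation explicitly.
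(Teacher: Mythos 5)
Your proof is correct and follows the same route as the paper's: co-phasing via $\theta_{k}^{(n)}=\phi_{k}^{(n)}+\Phi_{k}^{(n)}$ to turn $\varphi_{k}$ into a real sum of i.i.d.\ products of Rayleigh amplitudes, then the CLT for large $N_{k}$. The paper's proof is just a terse version of yours; your explicit moment computation ($\mathrm{E}[Z_{n}]=\pi/4$, $\mathrm{Var}(Z_{n})=1-\pi^{2}/16$ under the unit-second-moment normalization implied by $h_{k}^{(n)}\sim\mathcal{CN}(0,L_{BS})$, $g_{k}^{(n)}\sim\mathcal{CN}(0,L_{SU,k})$) correctly pins down the constants $\mu_{k}$ and $v_{k}$ that the paper states without derivation.
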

\begin{proof} 

	 The SNR can be maximized by letting $ \theta^{(n)}_{k} = \phi^{(n)}_{k} + \Phi^{(n)}_{k} $. Thus, $\varphi_{k}$ corresponds to the sum of product of two independent Rayleigh RVs. According to the CLT, when $N_{k}\rightarrow \infty$, $\varphi_{k}$ follows Gaussian distribution, $\varphi_{k}\sim \mathcal{N}\left(\mu_{k},v_{k}\right)$. Hence, the PDF of $\varphi_{k}$ can be finally expressed as in \eqref{phi_k}.
\end{proof}

\begin{proposition} In the perfect SIC case, the BER of $\text{U}_k$ is 
\begin{align}
	\label{BER_userk}
	\mathcal{P}_{e,k} & \approx \frac{\exp\left(-\frac{2cv_{k}+\mu^{2}_{k}}{2v_{k}}\right)}{\sqrt{2 v_{k}}}\sum_{i}\mathcal{P}\left(A^{(i)}_{k}\right)\nonumber\\
	&\exp\left(\frac{\left(bA^{(i)}_{k}\sqrt{\varrho_{k}\gamma}v_{k}-\mu_{k}\right)^{2}}{4a\left(A^{(i)}_{k}\right)^{2}\varrho_{k}\gamma v^{2}_{k}+2v_{k}}\right)\sqrt{\frac{v_{k}}{4\left(A^{(i)}_{k}\right)^{2}\varrho_{k}\gamma v_{k} + 2}}\nonumber\\
	&\left(1 + \text{erf}\left(\sqrt{\frac{\left(bA^{(i)}_{k}\sqrt{\varrho_{k}\gamma}v_{k}-\mu_{k}\right)^{2}}{4a\left(A^{(i)}_{k}\right)^{2}\varrho_{k}\gamma v^{2}_{k}+2v_{k}}}\right)\right),
\end{align}
where $ \left(a,b,c\right) = \left( 0.3842, 0.7640, 0.6964\right) $ are the fitting coefficients used by Lopez-Benitez and Casadevall in approximating $ Q $-function \cite{Qfunction}.
\end{proposition}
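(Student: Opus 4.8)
The plan is to evaluate the defining integral \eqref{Pe_integralk} directly by inserting the two ingredients already established. Substituting the conditional BER from Lemma~1 and the PDF from Lemma~2, and interchanging the finite sum with the integral, gives
\begin{align}
\mathcal{P}_{e,k} = \sum_{i} \mathcal{P}\!\left(A^{(i)}_{k}\right) \frac{1}{\sqrt{2\pi v_{k}}} \int_{0}^{\infty} Q\!\left(A^{(i)}_{k} x \sqrt{\varrho_{k}\gamma}\right) \exp\!\left(-\frac{\left(x-\mu_{k}\right)^{2}}{2v_{k}}\right) dx. \nonumber
\end{align}
The obstacle is immediate: the Gaussian $Q$-function has no elementary antiderivative, so this integral cannot be computed in closed form as written. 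This is precisely why the statement is an approximation and why the fitting triple $(a,b,c)$ enters.

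First I would replace the $Q$-function by the Lopez-Benitez--Casadevall exponential approximation $Q(z) \approx \exp(-az^{2} - bz - c)$, valid for $z \ge 0$, which is legitimate here because the argument $A^{(i)}_{k} x \sqrt{\varrho_{k}\gamma}$ is nonnegative over the integration range. Each summand's integrand then becomes the exponential of a single quadratic in $x$: collecting the $x^{2}$, $x$, and constant contributions from both the approximated $Q$-function and the Gaussian kernel yields the form $\exp\!\left(-p x^{2} + q x + r\right)$ with $p = a \left(A^{(i)}_{k}\right)^{2}\varrho_{k}\gamma + \frac{1}{2v_{k}}$, $q = \frac{\mu_{k}}{v_{k}} - b A^{(i)}_{k}\sqrt{\varrho_{k}\gamma}$, and $r = -c - \frac{\mu_{k}^{2}}{2v_{k}}$.

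Next I would complete the square and apply the standard half-line Gaussian integral
\begin{align}
\int_{0}^{\infty} e^{-p x^{2} + q x + r}\,dx = \frac{1}{2}\sqrt{\frac{\pi}{p}}\, \exp\!\left(r + \frac{q^{2}}{4p}\right)\left(1 + \text{erf}\!\left(\frac{q}{2\sqrt{p}}\right)\right), \nonumber
\end{align}
obtained by the change of variables $u = \sqrt{p}\,(x - q/(2p))$ and splitting the resulting error-function integral at the origin. The lower limit being $0$ rather than $-\infty$ is exactly what produces the $\left(1+\text{erf}(\cdot)\right)$ factor instead of a bare Gaussian normalization.

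Finally, I would substitute $p$, $q$, and $r$ back and fold the prefactor $1/\sqrt{2\pi v_{k}}$ into the $\frac{1}{2}\sqrt{\pi/p}$ term. The constant part $\exp(r)$ immediately yields the $\exp\!\left(-(2cv_{k}+\mu_{k}^{2})/(2v_{k})\right)$ factor, the term $q^{2}/(4p)$ reproduces the argument of both the outer exponential and the error function, and the simplification of $\frac{1}{\sqrt{2\pi v_{k}}}\cdot\frac{1}{2}\sqrt{\pi/p}$ produces the remaining square-root coefficient, recovering \eqref{BER_userk}. The main obstacle is purely the non-integrability of the $Q$-function, resolved by the exponential approximation; the remaining delicacy is careful bookkeeping of the $v_{k}$ factors so that the numerators and denominators inside the exponential and the error function match, while the accuracy of the whole expression is inherited from the fidelity of the Lopez-Benitez--Casadevall fit.
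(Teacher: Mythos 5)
Your proposal is correct and follows essentially the same route as the paper: substitute the Lopez-Benitez--Casadevall approximation $Q(z)\approx e^{-az^{2}-bz-c}$ and the Gaussian PDF of Lemma~2 into \eqref{Pe_integralk}, then evaluate the resulting half-line Gaussian integral --- the formula you derive by completing the square is precisely \cite[Eq.\ (3.322.2)]{Grad}, which the paper invokes directly instead of rederiving. The only difference is that you work out the bookkeeping of $p$, $q$, $r$ explicitly where the paper compresses it into a citation.
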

\begin{proof}
 A tighter and more tractable approximation of $ Q $-function in Lemma 1 can be given by Lopez-Benitez and Casadevall. The integral obtained by substituting this approximation and Lemma 2 into \eqref{Pe_integralk} can be written in terms of error function \cite[Eq. (3.322.2)]{Grad}. Therefore, the BER of $\text{U}_k$ can be finally stated in \eqref{BER_userk}. 
\end{proof}

In general scenarios, detection errors can appear at any step of the SIC process. Proposition 2 considers a special case (two-user scenario, where $\text{U}_1$ and $\text{U}_2$ are located at the transmission and reflection zones, respectively), which can be generalized for $K$ users. 

\begin{proposition} In the presence of SIC error, the BER for $\text{U}_1$ and $\text{U}_2$ can be given by
	\begin{align}
		\label{SIC}
		&\mathcal{P}^{sic}_{e,1} = \mathcal{P}_{e,1}\hspace{0.08cm} \text{and}  \hspace{0.08cm}   \mathcal{P}^{sic}_{e,2} \approx \mathcal{P}_{e,2}\mathcal{P}\left(x^{c}_{1}\right) + 0.5\left(1-\mathcal{P}\left(x^{c}_{1}\right)\right),
	\end{align}
where $\mathcal{P}\left(x^{c}_{1}\right)$ refers to average bit error probability for $\text{U}_1$ when its signal is detected at $\text{U}_2$ and can be calculated from \eqref{BER_userk} with related parameters of $\varphi_{2}$.
\end{proposition}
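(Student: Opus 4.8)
The plan is to derive both identities in \eqref{SIC} from the two-stage SIC chain via the law of total probability, conditioned on the success of the first cancellation. First I would dispatch $\text{U}_1$: since it carries the largest power coefficient $a_1$, its detection rule \eqref{Userk_ML} with $k=1$ contains no cancellation term and it applies MLD directly, treating $x_2$ as noise. No SIC stage is ever executed at $\text{U}_1$, so no cancellation error can propagate into its decision statistic, and its error probability is unchanged from the perfect-SIC analysis, giving $\mathcal{P}^{sic}_{e,1} = \mathcal{P}_{e,1}$ at once from Proposition 1.

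For $\text{U}_2$ I would condition on whether the interfering symbol $x_1$ is correctly recovered in the first SIC stage. Denoting by $\mathcal{P}(x_1^c)$ the probability of correct recovery, the law of total probability yields $\mathcal{P}^{sic}_{e,2} = \mathcal{P}(e_2\mid x_1\text{ correct})\,\mathcal{P}(x_1^c) + \mathcal{P}(e_2\mid x_1\text{ wrong})\,\bigl(1-\mathcal{P}(x_1^c)\bigr)$. When $x_1$ is recovered correctly, the replica subtracted in $y_2^{*}$ is exact, the residual interference vanishes, and the subsequent MLD for $x_2$ coincides exactly with the perfect-SIC detector of Proposition 1, so $\mathcal{P}(e_2\mid x_1\text{ correct}) = \mathcal{P}_{e,2}$.

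The principal obstacle is justifying $\mathcal{P}(e_2\mid x_1\text{ wrong}) \approx 0.5$. Here I would note that for BPSK a wrong first-stage decision flips the sign, $\hat{x}_1 = -x_1$, so the term subtracted in $y_2^{*}$ differs from the true interference by $2\sqrt{a_1 P}\,\mathbf{g}_2^{T}\mathbf{\Theta}_2\mathbf{h}_2 x_1$; because $a_1 > a_2$ this residual dominates the useful $\sqrt{a_2 P}$ component, so after averaging over the equiprobable polarity of $x_1$ and over the cascaded channel the hard decision on $x_2$ is effectively a coin flip, which is precisely the source of the $\approx$ sign in \eqref{SIC}. Substituting the two conditional probabilities into the total-probability expansion then reproduces the stated form of $\mathcal{P}^{sic}_{e,2}$.

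Finally I would evaluate $\mathcal{P}(x_1^c)$. Because $x_1$ is cross-decoded \emph{at} $\text{U}_2$, the relevant cascaded gain is $\varphi_2$ rather than $\varphi_1$, with useful power $a_1 P$ and $x_2$ acting as interference; hence the corresponding error probability is obtained from the closed form \eqref{BER_userk} by specializing Lemmas 1 and 2 to the parameters $\mu_2$, $v_2$, and $\varrho_2$ of $\varphi_2$, and $\mathcal{P}(x_1^c)$ is taken as its complement. The remaining details are pure bookkeeping: matching the power coefficient and interference variance to the first SIC stage rather than to $\text{U}_2$'s own symbol detection, after which both expressions in \eqref{SIC} follow.
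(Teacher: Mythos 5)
Your proof is correct and follows essentially the same route as the paper's: $\text{U}_1$ performs no SIC so $\mathcal{P}^{sic}_{e,1}=\mathcal{P}_{e,1}$, and $\text{U}_2$'s BER is expanded by the law of total probability over correct/erroneous first-stage detection, with $\mathcal{P}\left(x_{2}/x^{c}_{1}\right)=\mathcal{P}_{e,2}$, the erroneous branch replaced by the worst-case value $0.5$, and the cross-detection probability of $x_1$ at $\text{U}_2$ evaluated from \eqref{BER_userk} with the parameters of $\varphi_{2}$. Your BPSK sign-flip argument for why the erroneous-SIC branch is essentially a coin flip is more explicit than the paper's bare assertion that it is \enquote{very close to the worst case}, and your reading of $\mathcal{P}\left(x^{c}_{1}\right)$ as the correct-detection probability (the complement of the cross-detection BER) is the interpretation under which the stated formula is consistent, given the paper's somewhat ambiguous wording on this point.
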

\begin{proof} 
		$\text{U}_1$ does not carry out SIC process and hence the BER for $\text{U}_1$ is calculated form \eqref{BER_userk}, i.e., $\mathcal{P}^{sic}_{e,1} = \mathcal{P}_{e,1}$. On the other hand, $\text{U}_2$ performs SIC process, and hence the BER for $\text{U}_2$ should be considered for both correct and erroneous SIC cases, $ \mathcal{P}^{sic}_{e,2} = \mathcal{P}\left(x_{2}/x^{c}_{1}\right)\mathcal{P}\left(x^{c}_{1}\right) + \mathcal{P}\left(x_{2}/x^{e}_{1}\right)\mathcal{P}\left(x^{e}_{1}\right)$. Here, $ \mathcal{P}\left(x^{c}_{1}\right) $ and $ \mathcal{P}\left(x^{e}_{1}\right) $ denote average bit error probability for correct and erroneous detection of $x_{1}$ at $\text{U}_2$, respectively. $ \mathcal{P}\left(x_{2}/x^{c}_{1}\right) $ and $ \mathcal{P}\left(x_{2}/x^{e}_{1}\right) $ denote average bit error probability for $x_{2}$ when $x_{2}$ is correctly and erroneously detected at $\text{U}_2$, respectively. $\mathcal{P}\left(x^{c}_{1}\right)$ is calculated from \eqref{BER_userk} with related parameters of $\varphi_{2}$, and hence $\mathcal{P}\left(x^{e}_{1}\right) = 1- \mathcal{P}\left(x^{c}_{1}\right)$. Form \eqref{BER_userk}, $ \mathcal{P}\left(x_{2}/x^{c}_{1}\right) $ can be also found, i.e., $\mathcal{P}\left(x_{2}/x^{c}_{1}\right) = \mathcal{P}_{e,2}$ and $ \mathcal{P}\left(x_{2}/x^{e}_{1}\right) $ is very close to the worst case, i.e., $ \mathcal{P}\left(x_{2}/x^{e}_{1}\right) \approx 0.5$.
\end{proof}

It is difficult to get an insight from the BER expressions in Propositions. In corollary 1, we examine the BER behavior in the high SNR regime.

\begin{corollary} In high SNR region, the BER can be given as
	\begin{align}
		\label{Errorfloor}
		\mathcal{P}^{\infty}_{e,k} & \approx \frac{\exp\left(-\frac{2cv_{k}+\mu^{2}_{k}}{2v_{k}}\right)}{\sqrt{2 v_{k}}}\sum_{i}\mathcal{P}\left(A^{(i)}_{k}\right)\nonumber\\
		&\times\exp\left(\frac{\left(bA^{(i)}_{k}\sqrt{\varpi}v_{k}-\mu_{k}\right)^{2}}{4a\left(A^{(i)}_{k}\right)^{2}\varpi v^{2}_{k}+2v_{k}}\right)\sqrt{\frac{v_{k}}{4\left(A^{(i)}_{k}\right)^{2}\varpi v_{k} + 2}}\nonumber\\
		&\times\left(1 + \text{erf}\left(\sqrt{\frac{\left(bA^{(i)}_{k}\sqrt{\varpi}v_{k}-\mu_{k}\right)^{2}}{4a\left(A^{(i)}_{k}\right)^{2}\varpi v^{2}_{k}+2v_{k}}}\right)\right),
	\end{align}
	where $\varpi = \left(L_{k}\left(N_{\chi}-N_{k}\right)/P\right)^{-1} $.
\end{corollary}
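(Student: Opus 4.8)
The plan is to exploit the fact that the signal-to-noise ratio $\gamma = P/\sigma^2$ enters the exact expression \eqref{BER_userk} only through the single composite quantity $\varrho_k\gamma$. Indeed, the parameters $\mu_k$ and $v_k$ from Lemma 2, the amplitude combinations $A_k^{(i)}$ together with their probabilities $\mathcal{P}(A_k^{(i)})$ from Lemma 1, and the fitting coefficients $(a,b,c)$ are all independent of $\gamma$ once the transmit power $P$ is held fixed. Therefore the high-SNR analysis $\gamma\to\infty$ (equivalently $\sigma^2\to 0$ with $P$ fixed) reduces entirely to computing $\lim_{\gamma\to\infty}\varrho_k\gamma$ and substituting the resulting constant back into \eqref{BER_userk}.

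The key computation is elementary. Recalling from Lemma 1 that $\varrho_k = \left(1 + L_k(N_\chi-N_k)\gamma/P\right)^{-1}$, I would write
\begin{align}
\varrho_k\gamma = \frac{\gamma}{1 + L_k(N_\chi-N_k)\gamma/P} = \frac{1}{1/\gamma + L_k(N_\chi-N_k)/P},
\end{align}
and then let $\gamma\to\infty$, so that the term $1/\gamma$ vanishes and
\begin{align}
\lim_{\gamma\to\infty}\varrho_k\gamma = \frac{P}{L_k(N_\chi-N_k)} = \left(L_k(N_\chi-N_k)/P\right)^{-1} = \varpi.
\end{align}
Replacing every occurrence of $\varrho_k\gamma$ in \eqref{BER_userk} by this limiting value $\varpi$ then yields \eqref{Errorfloor} directly, since no other $\gamma$-dependence is present.

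There is no serious analytical difficulty here; the only point worth emphasizing is the physical mechanism that produces a floor at all. Because the limiting value $\varpi = P/(L_k(N_\chi-N_k))$ is independent of the noise power $\sigma^2$, the effective signal-to-interference-plus-noise ratio governing each $Q$-function argument saturates as $\sigma^2\to 0$ rather than growing without bound. Consequently the arguments $A_k^{(i)}\varphi_k\sqrt{\varrho_k\gamma}$ appearing in Lemma 1 remain finite, every $Q(\cdot)$ term stays bounded away from zero, and $\mathcal{P}^{\infty}_{e,k}$ does not decay to zero. I would close by remarking that \eqref{Errorfloor} makes this irreducible error floor explicit: it is fixed by the residual inter-subsurface interference through the signal-to-interference ratio $P/(L_k(N_\chi-N_k))$ and by the per-user channel statistics $(\mu_k, v_k)$, and is wholly independent of the operating SNR.
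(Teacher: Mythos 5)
Your proposal is correct and follows exactly the same route as the paper's own proof: recognize that $\gamma$ enters \eqref{BER_userk} only through the product $\varrho_k\gamma$, compute $\lim_{\gamma\to\infty}\varrho_k\gamma = \left(L_k\left(N_\chi-N_k\right)/P\right)^{-1} = \varpi$, and substitute this constant into \eqref{BER_userk} to obtain \eqref{Errorfloor}. Your additional remarks on why the interference-limited saturation of $\varrho_k\gamma$ produces an error floor merely make explicit the interpretation the paper states after the corollary, so there is no substantive difference.
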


\begin{proof}
	Let $\varpi = \varrho_{k}\gamma$, when $\gamma\rightarrow \infty$, $\varpi =  \left(L_{k}\left(N_{\chi}-N_{k}\right)/P\right)^{-1}$, substituting in \eqref{BER_userk} we obtain \eqref{Errorfloor}.
\end{proof}

From Corollary 1, it can be noticed that, due to the subsurface interference, the asymptotic BER does not depend on the SNR. The BER reaches a fixed value (error floor) in the high SNR region, which indicates that the diversity gain is zero.



	\section{Numerical Results}
	\label{sec:results}
		
	In this section, Monte Carlo simulations are presented to show the performance of the STAR-RIS-NOMA system, and to validate the BER analytical results. Unless stated otherwise, we assume $ P = 1 $ and $\alpha_{BS} = \alpha_{SU} = 2$. We compare the proposed system with the classical NOMA system by considering the same simulation parameters, and the BS-$\text{U}_k$ path gain is modelled as $L_k=d_k^{-\alpha}$, where $d_k$ is the BS-$\text{U}_k$ distance and $\alpha=2$ is the path loss exponent. Fig. \ref{BERvsSNR}(a)-(b) show the BER performance versus SNR in the case of perfect SIC for two-user scenario, where each user is located at different side of RIS. It can be clearly seen that the obtained analytical results perfectly match the simulations curves. It can be also seen that both users achieve better BER performance as the number of RIS elements increases. Compared to classical NOMA system, Fig. \ref{BERvsSNR}(b) shows that  STAR-RIS NOMA ($\text{U}_2$) achieves $20$ dB, $25$ dB, and $29$ dB performance gains for $N_{1}=25, 50$ and $75$, respectively at BER of $10^{-3}$. Fig. \ref{BERvsSNRsic} shows the BER performance versus SNR for $\text{U}_{2} $ in the cases of perfect and imperfect SIC. It is observed that the BER performance gets worse in the presence of SIC error even when the number of RIS elements increases. However, when $a_2$ decreases ($a_1$ increases), the SIC error decreases. This is because increasing $a_1$ increases the probability of detecting its signal correctly and results in a reduction of SIC error. Fig. \ref{BERvselements} depicts the BER performance versus number of RIS elements for different power allocation coefficients. It is observed that $\text{U}_1$ is more sensitive to the change in the power allocation over the change of the number of RIS elements. This is because $\text{U}_1$ does not carry out any SIC process, and hence changes in the power allocation result in variations in the interference level caused by $\text{U}_2$. For instance, an increment of $0.1$ in the power allocation makes $\text{U}_1$ need $13$ fewer elements to achieve the same BER of $10^{-3}$. On the other side, $\text{U}_2$ needs 3 elements more to achieve the same BER of $10^{-3}$. Fig. \ref{BERthreeusers} plots the BER performance versus SNR for three-user scenario, where $\text{U}_1$ and $\text{U}_2$ are located at the transmission zone and $\text{U}_3$ is located at the reflection zone. Notice that BER performance enhances as the number of RIS elements increases. However, when SNR increases, the BERs for $\text{U}_1$ and $\text{U}_2$ reach an error floor due to the mutual subsurface-interference. On the other hand, the BER for $\text{U}_3$ does not reach error floor since $U_{3}$ does not suffer from any mutual subsurface-interference.

\begin{figure}[]
	\includegraphics[width=0.4\textwidth]{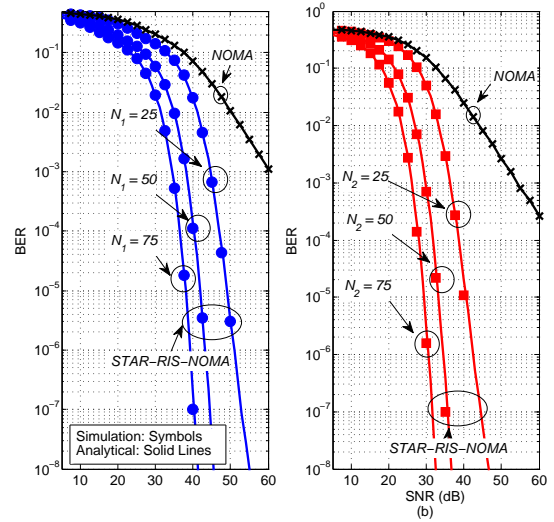}
	\centering
	\caption{The BER versus SNR for (a) $ \text{U}_{1} $ and (b) $ \text{U}_{2} $ in the perfect SIC case, $\left(a_{1}, a_{2}\right) = \left(0.7, 0.3\right)$ and $\left(d_{BS}, d_{SU,1}, d_{SU,2}\right) = \left(50, 6, 4\right)$ m}	
	\label{BERvsSNR}
\end{figure}

\begin{figure}[]
	\includegraphics[width=0.4\textwidth]{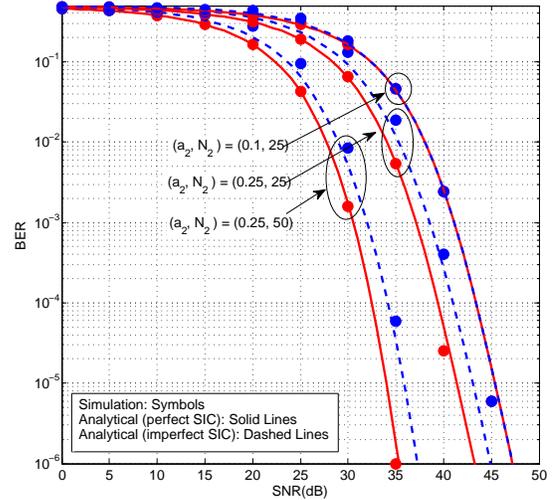}
	\centering
	\caption{The BER performance versus SNR for $\text{U}_{2} $ in the perfect and imperfect SIC cases.}	
	\label{BERvsSNRsic}
\end{figure}
\begin{figure}[]
	\includegraphics[width=0.4\textwidth]{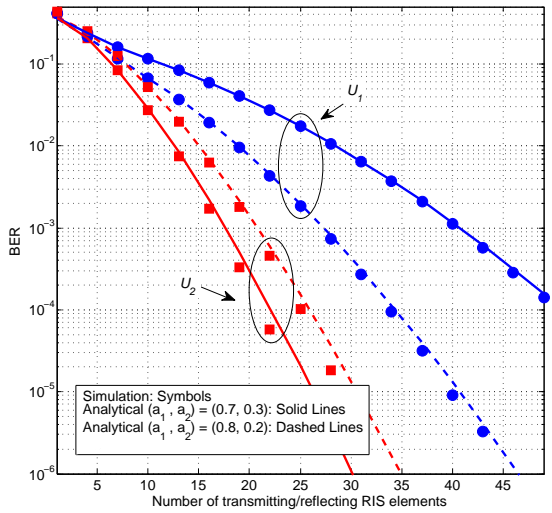}
	\centering
	\caption{The BER performance versus number of transmitting/reflecting elements for different power allocation coefficients and $\gamma = 40$ dB.}	
	\label{BERvselements}
\end{figure}
\begin{figure}[]
	\includegraphics[width=0.4\textwidth]{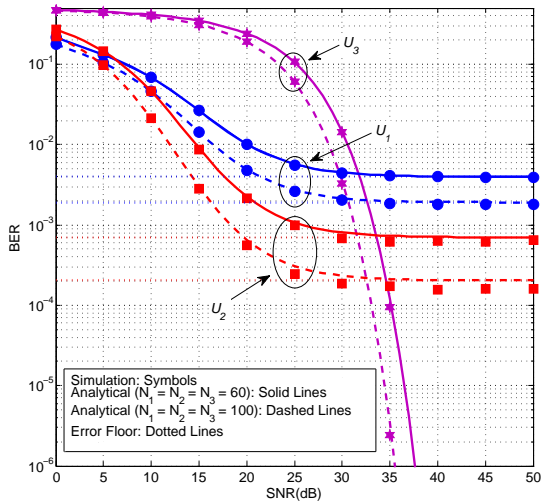}
	\centering
	\caption{The BER versus SNR for three-user scenario, $\left(a_{1}, a_{2}, a_{3}\right)  = \left(0.75, 0.248, 0.002\right)$ and $\left(d_{BS}, d_{SU,1}, d_{SU,2}, d_{SU,2}\right) = \left(20, 3, 2.5, 2\right)$ m.}	
	\label{BERthreeusers}
\end{figure}

\section{Conclusion}
\label{sec:conclusion}
	We have examined the BER performance of STAR-RIS-NOMA network, in which a STAR-RIS utilizes the MS protocol to provide communication between a BS and multiple NOMA users located on both its sides. We have derived closed-form expressions for BER in both perfect and imperfect SIC cases. Then, asymptotic analyses have been carried out to provide further insights into the error rate behavior in the high SNR region. Based on the numerical results, we conclude that the BER performance enhances with increasing numbers of RIS elements. Moreover, in the case of imperfect SIC, the BER performance gets worse. On the other hand, in the case of perfect SIC, the users achieve diversity gain. Furthermore, the farthest user is more sensitive to changes in the power allocation over the change of numbers of RIS elements. Finally, it has been shown that the BER performance of STAR-RIS-NOMA outperforms that of the classical NOMA system and STAR-RIS based NOMA can be useful candidate for future systems. For a future work, the BER analysis provided here can be extended to the scenario where the STAR-RIS is partitioned between multiple users, and the aim is to find the minimum number of elements needs to be allocated for each user to guarantee a minimum BER threshold for all users.

	\end{document}